\newtheorem{theorem}{Theorem}
\newtheorem{remark}{Remark}
\author{\IEEEauthorblockN{Chuan Ma$^\star$$^\dag$, Ming Ding$^\dag$, He Chen$^\star$, Zihuai Lin$^\star$, Guoqiang Mao$^\dag$$^\sharp$ and David L\'opez-P\'erez$^\spadesuit$\\
{$^\star$ School of Electrical and Information Engineering, University of Sydney, Australia}\\
{$^\dag$ Data61, CSIRO, Australia}\\
{$^\sharp$ School of Computing and Communications, University of Technology, Sydney, Australia}\\
{$^\spadesuit$ Nokia Bell Labs, Ireland}\\
\small{Email: \{chuan.ma, he.chen, zihuai.lin\}@sydney.edu.au, Ming.Ding@data61.csicro.au, g.mao@ieee.org, david.lopez-perez@nokia.com}}}
\title{On the Performance of Multi-tier Heterogeneous Cellular Networks with Idle Mode Capability}
\begin{document}

\maketitle
\begin{abstract}
This paper studies the impact of the base station (BS) idle mode capability (IMC) on the network performance of multi-tier and dense heterogeneous cellular networks (HCNs). 
Different from most existing works that investigated network scenarios with an infinite number of user equipments (UEs), 
we consider a more practical setup with a finite number of UEs in our analysis. 
More specifically, 
we derive the probability of which BS tier a typical UE should associate to and the expression of the activated BS density in each tier.
Based on such results, 
analytical expressions for the coverage probability and the area spectral efficiency (ASE) in each tier are also obtained. 
The impact of the IMC on the performance of all BS tiers is shown to be significant. 
In particular, 
there will be a surplus of BSs when the BS density in each tier exceeds the UE density, 
and the overall coverage probability as well as the ASE continuously increase when the BS IMC is applied. 
Such finding is distinctively different from that in existing work. 
Thus, 
our result sheds new light on the design and deployment of the future 5G HCNs.
\end{abstract}

\section{Introduction}

Commercial wireless networks are evolving towards higher frequency reuse by deploying smaller cells~\cite{lopez2015towards} to meet the explosively increasing demand for more mobile data traffic~\cite{cisco2015global}.
Heterogeneous cellular networks (HCNs),
which are comprised of a conventional cellular network overlaid with a diverse set of small cell base stations (BSs),
such as metro-, pico- and femto-cells,
can help to realize such view and support much higher data rates per unit area than conventional networks~\cite{7437378}.
It is important to note that each BS tier in a HCN may have different characteristics,
e.g., different spatial density, 
transmit power, 
path loss function, etc.
A comprehensive analysis that takes into account the differences among different BS tiers in a HCN has been carried out in~\cite{jo2012heterogeneous},
where the BS locations are modeled as a homogeneous poison point process (HPPP).
In \cite{jo2012heterogeneous}, 
a flexible user equipment (UE) association strategy was considered,
and each BS tier was assumed to have different spatial densities, 
transmit powers as well as path loss exponents.

The co-channel deployment of macro cell and small cell BSs in HCNs,
i.e., all BS tiers operate on the same frequency spectrum,
have attracted considerable attention recently, 
e.g.,~\cite{6497017},~\cite{7306533}.
Andrews \textit{et al.} in~\cite{andrews2011tractable} first analyzed the coverage probability of a single-tier small cell network by modeling BS locations as a HPPP.
That study concluded that the coverage probability does not depend on the density of BSs in interference-limited scenarios (i.e., when the BSs are dense enough).
Following~\cite{andrews2011tractable},
Jo \textit{et al.} in~\cite{6171996} also reached the same conclusion for each BS tier in a multi-tier HCN.
However, 
it is important to note that the aforementioned work assumed an unlimited number of UEs in the network,
which implies that all BSs would always be active and transmit in all time and frequency resources.
Obviously, 
this may not be the case in practice.

To attain a more practical network performance,
Lee \textit{et al.} in~\cite{lee2012coverage} first analyzed the coverage probability of a single-tier small cell network with a finite  number of UEs,
and derived an optimal BS density accordingly,
by considering the tradeoff between the performance gain and the resultant network cost.
Recently, Ding \textit{et al.} in~\cite{ding2016study} analyzed the coverage probability and area spectral efficiency (ASE) of a single-tier small BS network with probabilistic line-of-sight (LoS) and non-LoS (NLoS) transmissions,
in which the UE number is finite (e.g., 300 UEs/km$^2$) and the small cell BS has an idle mode capability (IMC).
More specifically, 
if there is no active UE within the coverage areas of a certain BS,
that BS will turn off its transmission module using the idle mode.
The IMC switches off unused BSs,
and thus can improve the network energy efficiency and the UEs' coverage probability as the network density increases.
This is because UEs can receive stronger signals from the closer BSs,
while the interference power remains constant thanks to the IMC.
This conclusion in~\cite{ding2016study}
- the coverage probability actually depends on the density of BSs in a interference-limited network -
is fundamentally different from the previous results in~\cite{andrews2011tractable} and~\cite{6171996},
and presents new insights for the design of 5G networks. 
furthermore, 
the IMC even changes the capacity scaling law in ultra-dense networks (UDN)~\cite{Ding2017capScaling}.


However, 
the performance analysis presented in~\cite{lee2012coverage} and~\cite{ding2016study} is only applicable to the single-tier small cell networks.
To our best knowledge, 
the theoretical study of multi-tier and dense HCNs with a finite number of users has not been conducted before,
although some preliminary simulation results can be found in~\cite{lopez2015towards}.

Motivated by the above theoretical gap,
in this paper, 
we for the first time analyze the coverage probability and ASE of a HCN with
\emph{i)} multiple BS tiers,
\emph{ii)} an IMC at small cell BSs, and
\emph{iii)} a limited number of UEs.
To this end, 
we first derive an analytical expression for the density of active BSs in each tier.
Based on this, 
the analytical expressions of the coverage probability and ASE for the HCNs with IMC are obtained.
It is worth pointing out that the extension from a single-tier network scenario to a multi-tier one is \emph{not} trivial,
because BS activation needs to be considered for both the intra-tier and inter-tier BSs.

The rest of this paper is structured as follows.
We describe the system model in~Section~\uppercase\expandafter{\romannumeral2},
and present the main analytical results on the activated BS density, 
the coverage probability and the ASE for each BS tier and for the overall HCN in Section~\uppercase\expandafter{\romannumeral3}.
Numerical results are discussed in Section~\uppercase\expandafter{\romannumeral4}.
Finally, 
the conclusions are drawn in Section~\uppercase\expandafter{\romannumeral5}.

\section{System Model}

We consider a general HCN model consisting of $M$ BS tiers that are characterized by different spatial densities and transmit powers.
The positions of BSs in the $i$-th tier are modeled by a HPPP $\Phi_i$ with a density of $\lambda_i$ BSs/km$^2$.
The positions of UEs are also modeled according to a HPPP $\Phi_{u}$ with a density of $\lambda_{u}$ UEs/km$^2$ that is independent of $\{\Phi_i\}_{i=1,\cdot\cdot\cdot, M}$.
In the majority of the existing works,
$\lambda_{u}$ was assumed to be sufficiently large,
so that each BS in each tier always has at least one associated UE.
However, in our model with finite BS and UE densities,
there may be no UE associated to a BS,
and thus such BS will be turned off thanks to the IMC.

We consider a maximum average received power based cell association strategy,
where each UE associates to only one BS
that provides the maximum average received power.
The average received power from the $i$-th tier is given by
\begin{equation}\label{signal}
	S_i=P_ir^{-\alpha},
\end{equation}
where $P_i$ is the BS transmit power in the $i$-th tier,
$r$ is the distance between the BS and a typical UE sitting at the origin,
and $\alpha$ is the path loss exponent.

Since UEs are randomly and uniformly distributed in the network,
we adopt the following assumption:
the activated BSs in each tier follow a HPPP distribution $\widetilde{\Phi}$,
the density of which is denoted by $\widetilde{\lambda_i}$ BSs/km$^2$~\cite{ding2016study},~\cite{yang2016analysis}.

The SINR of the typical UE with a random distance $r$ from its associated BS in the $i$-th tier is
\begin{equation}\label{SINR}
	\textrm{SINR}_i(r)=\frac{{{P_i}{h_{i0}}{r^{ - \alpha }}}}{{\sum\nolimits_{j = 1}^M {\sum\nolimits_{k \in \widetilde \Phi \backslash {b_0}} {{P_j}{h_{jk}}{{\left| {{Y_{jk}}} \right|}^{ - \alpha }} + {\sigma ^2}} } }},
\end{equation}
where $h_{i0}$ and $h_{ji}$ is the exponentially distributed channel power with unit mean from the serving BS and the $k$-th interfering BS in the $j$-th tier (assuming Rayleigh fading), respectively,
$\left| {{Y_{jk}}} \right|$ is the distance from the activated BS in the $j$-th tier to the origin,
and $b_0$ is the serving BS in the $i$-th tier.
Note that only the activated BSs in $\widetilde \Phi \backslash {b_0}$ inject effective interference into the network,
since the other BSs are muted.

\textbf{It is important to note that it has been shown in~\cite{Ding2017varFactors}
that the analysis of the following factors/models is not urgent, as
they do not change the qualitative conclusions of this type of performance
analysis in UDNs: }\textbf{\emph{(i) }}\textbf{a non-Poisson distributed
BS density, }\textbf{\emph{(ii) }}\textbf{a BS density dependent transmission
power, }\textbf{\emph{(iii) }}\textbf{a more accurate multi-path modeling
with Rician fading, and }\textbf{\emph{(iv) }}\textbf{an additional
modeling of correlated shadow fading. Thus, we will concentrate on
presenting our most fundamental discoveries in this paper, and show
the minor impacts of the above factors/models in the journal version
of this work.}

In Fig. \ref{fig:cell}, 
we show an illustration of the proposed network,
which consists of two BS tiers.
In this case, 
UE~1 connects to BS~2 instead of BS~1 in tier~1 under the assumption that
BS~2 provides the strongest average received signal.
The other BSs are in idle mode since there is no UE associated to them.
\begin{figure}
\centering
  \includegraphics[width=0.35\textwidth,height=5.5 cm]{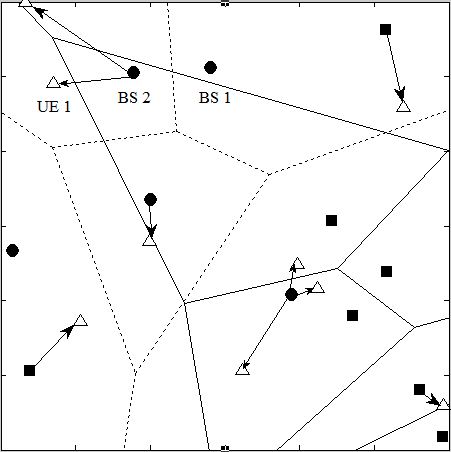}\\
  \caption{Stochastic-geometry model of a two BS tier  HCN.
  The tier 1 BSs, tier 2 BSs and UEs are marked with solid squares, solid dots and empty triangles, respectively.
  Each UE is connected to the BS that provides the strongest average receive signal,
  which is marked by a black arrow.
  BSs with no UE associated are in idle mode. }
  \label{fig:cell}
\end{figure}

\section{Analytical Results}

To evaluate the impact of the IMC on the performance of each BS tier,
we first analyze the probability of a given average number of UEs in each cell.
Then, we derive expressions for the coverage probability and the ASE.

\subsection{Average Number of UEs in each Cell}

The coverage area of each small cell is a random variable $V$,
representing the size of a Poisson Voronoi cell.
Although there is no known closed-form expression for $V$'s probability distribution function (PDF),
some accurate estimates of this distribution have been proposed in the literature, e.g.,~\cite{ferenc2007size} and~\cite{hinde1980monte}.

In~\cite{ferenc2007size},
a simple gamma distribution derived from Monte Carlo simulations was used to approximate the PDF of $V$ for the $i$-th BS tier,
given by
\begin{equation} \label{pdf_of_V_a}
	\begin{split}
		f_{V_i}(x)
= &(b{\lambda _{i}})^q{x^{q - 1}}\frac{{\exp ( - b{\lambda _{i}}x)}}{{\Gamma (q)}},\
	\end{split}
\end{equation}
where $q$ and $b$ are fixed values with $q = b = 3.5$,
$\Gamma(x) = \int^{+\infty}_{0}t^{x-1}e^{-t}dt$ is the standard gamma function
and $\lambda_{i}$ is the BS density of the $i$-th BS tier.

Since the distribution of UEs follows a HPPP with a density of $\lambda_{u}$,
given a Voronoi cell with size $x$,
the number of UEs located in this Voronoi cell is a Poisson random variable with a mean of $\lambda_{u}x$.
Denoting by $N_i$ the number of UEs located in a Voronoi cell in the $i$-th BS tier,
we have
\begin{equation}\label{p_n_uers_in_one_cell}
	\begin{split}
		&\mathbb{P}[N_i = n] \\
		&= \int_0^{ + \infty } {\frac{{{{({\lambda _{u}}x)}^n}}}{{n!}}\exp ( - {\lambda _{u}}x){f_{{V_i}}}(x)dx} \\
		&\overset{(a)}= \frac{{{{(b{\lambda _{i}})}^q}{{({\lambda _{u}})}^n}}}{{n!\Gamma (q){{({\lambda _{u}} + b{\lambda _{i}})}^{q + n}}}}\int_0^{ + \infty } {\exp ( - \xi){\xi^{n + q - 1}}d\xi}\\
		&\overset{(b)}= \frac{{\Gamma (n + q)}}{{\Gamma (n + 1)\Gamma (q)}}{\left(\frac{{{\lambda _{u}}}}{{{\lambda _{u}} + b{\lambda _{i}}}}\right)^n}{\left(\frac{{b{\lambda _{i}}}}{{{\lambda _{u}} + b{\lambda _{i}}}}\right)^q},~~~n\geq0
	\end{split}
\end{equation}
where step (a) is obtained by using $\xi = (\lambda_{u}+b\lambda_{i})x$,
and step (b) is obtained by using the definition of the gamma function.

\subsection{Probability of a UE associated to the $i$-th Tier}
According to (\ref{signal}),
each BS tier's density and transmit power determine the probability that a typical UE is associated with a BS in this tier.
The following theorem provides the per-tier association probability,
which is essential for deriving the main results in the sequel.
\begin{theorem}
The probability of a typical UE associated with a BS in the $i$-th BS tier is given by:
\begin{equation}\label{eq:a1}
	A_i=\frac{{{\lambda _i}}}{{\sum\limits_{j = 1}^M {{\lambda _j}{C_j}^2} }},
\end{equation}
where $i$ denotes the index of the BS tier associating with the typical UE,
and $C_j=({\frac{P_j}{P_i}})^{\frac{1}{\alpha}}$,
where $P_i$ is the BS transmit power of the $i$-th tier,
and $\alpha$ is its path loss exponent.
\end{theorem}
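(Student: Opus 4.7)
The plan is to condition on the distance to the nearest BS in each tier and then turn the max-received-power association rule into a comparison of these distances. Since $\Phi_j$ is a HPPP with density $\lambda_j$, the distance $R_j$ from the typical UE at the origin to its nearest tier-$j$ BS has the Rayleigh-type PDF $f_{R_j}(r) = 2\pi\lambda_j r \exp(-\pi\lambda_j r^2)$, and the void probability gives $\mathbb{P}[R_j > x] = \exp(-\pi\lambda_j x^2)$. Since the UE always picks the BS delivering the largest $P_j r^{-\alpha}$ among all BSs, and since $x \mapsto P_j x^{-\alpha}$ is monotonically decreasing, the strongest candidate in each tier is indeed its nearest BS, so the association-to-tier-$i$ event is equivalent to $P_i R_i^{-\alpha} \ge P_j R_j^{-\alpha}$ for every $j\ne i$.

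Next I would rearrange this inequality into a pure distance condition. Setting $C_j = (P_j/P_i)^{1/\alpha}$ as in the statement, the event $P_i R_i^{-\alpha} \ge P_j R_j^{-\alpha}$ is equivalent to $R_j \ge C_j R_i$. Conveniently $C_i = 1$, so the constraint $R_i \ge C_i R_i$ is trivially true and I can write the association probability as
\begin{equation*}
A_i = \int_0^\infty \mathbb{P}\!\left[\,\bigcap_{j\ne i}\{R_j > C_j r\}\,\right] f_{R_i}(r)\,dr.
\end{equation*}
Independence of the HPPPs $\{\Phi_j\}_{j=1}^M$ (stated in the system model) lets me factor the joint probability into $\prod_{j\ne i}\exp(-\pi\lambda_j C_j^2 r^2)$, so after multiplying by $f_{R_i}(r)$ the integrand becomes $2\pi\lambda_i\, r \exp\!\left(-\pi r^2 \sum_{j=1}^M \lambda_j C_j^2\right)$, where the $j=i$ term has been absorbed into the sum using $C_i=1$.

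The final step is the change of variables $u = \pi r^2 \sum_{j=1}^M \lambda_j C_j^2$, which collapses the integral to $\lambda_i / \sum_j \lambda_j C_j^2 \cdot \int_0^\infty e^{-u}\,du$, yielding exactly the claimed expression. I expect no genuine analytical obstacle here; the only subtlety is the bookkeeping around the definition of $C_j$, in particular verifying that letting $j=i$ enter the sum is both consistent ($C_i=1$) and necessary to obtain the symmetric denominator that appears in the theorem. Everything else is a textbook HPPP nearest-neighbour calculation.
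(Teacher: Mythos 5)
Your proposal is correct and follows essentially the same route as the paper's Appendix~A: reduce the max-received-power rule to the distance conditions $R_j > C_j R_i$, use the HPPP void probability and independence across tiers to get $\prod_{j\ne i}\exp(-\pi\lambda_j C_j^2 r^2)$, multiply by the Rayleigh-type nearest-neighbour PDF of $R_i$, and integrate. The only (harmless) difference is that you make explicit the observation $C_i=1$ that absorbs the $j=i$ term into the sum, which the paper leaves implicit.
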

\begin{proof}
See Appendix~A. 
\end{proof}
\vspace{0.2cm}

The intuition of Theorem 1 is that a UE prefers to connect to the BS tier with higher spatial density and transmit power, 
which follows the maximum received power based cell association strategy.

\subsection{Density of activated BSs in the $i$-th tier}
After attaining the probability of one UE associating to a BS in the $i$-th tier,
we are ready to derive the density of activated BS in the $i$-th tier.

Defined by $\mathbb{P}_{i}^{\textrm{off}}(n)$ the probability that the $i$-th tier BS is inactive when there are $n$ UEs in its coverage, then $\mathbb{P}_{i}^{\textrm{off}}(n)$ is given by
\begin{equation} \label{eq:i_n_off}
	\mathbb{P}_{i}^{\textrm{off}}(n)={{{{\mathbb{P} }}[N_i = n] (1 - A_i)} ^n},
\end{equation}
where ${\mathbb{P}}[N_i=n]$ is the probability that $n$ UEs in a cell of $i$-th BS tier and has been obtained from (\ref{p_n_uers_in_one_cell}),
and $A_i$ is the tier association probability obtained by Theorem 1.
\begin{remark}
In (\ref{eq:i_n_off}),
we assume there is no spatial correlation in the UE association process.
Thus, selecting which BS to connect is assumed to be independent for different UEs.
Hence, we have treated the association of the $n$ UEs one by one.
Note that this assumption may cause error as we underestimate the activated BS density of the considered BS tier.
In the simulation section to be presented later, 
we will show that this error is negligible, 
especially when the density of BSs is large.
\end{remark}

The density of activated BSs in the $i$-th tier $\widetilde \lambda _i$ can now be derived as
\begin{equation}
	{\widetilde \lambda _i}=\lambda_i\left(1-\sum\limits^{\infty}_{n=0}\mathbb{P}_{i}^{\textrm{off}}(n)\right),
\end{equation}
where $\mathbb{P}_{i}^{\textrm{off}}(n)$ is the probability that the $i$-th tier is inactivated when there are $n$ UEs in its coverage.

\subsection{The Coverage Probability}
We now investigate the coverage probability that the typical UE's SINR is above a predefined threshold $\tau$.
Since the typical UE is associated with at most one BS,
the coverage probability is given by
\begin{equation}\label{sumcov}
	\mathbb{P}^{\textrm{cov}}=\sum\limits_{i = 1}^M {A_i}\mathbb{E}_r\left\{\mathbb{P}[{\textrm{SINR}}_i(r) > \tau ]\right\},
\end{equation}
where $A_i$ is the probability that the typical UE is associated with the $i$-th BS tier,
which is given by (\ref{eq:a1}) and $\mathbb{P}[\textrm{SINR}_i(r) > \tau ]$ is the coverage probability of the typical UE associated with the $i$-th BS tier.
Our main results on the coverage probability is presented in Theorem~2.

\begin{theorem}
The coverage probability of a typical UE associated with the $i$-th tier is
\begin{equation}\label{cov}
\begin{split}
&\mathbb{E}_r\left\{\mathbb{P}[{\emph{SINR}}_i(r) > \tau ]\right\}\\
&=\int_0^\infty  {\exp \left\{\frac{{ - \tau {r^\alpha }{\sigma ^2}}}{{{P_i}}} - \sum\limits_{j = 1}^M \left( \pi {{\widetilde \lambda }_j}{C_j}^2{r^2}Z(\tau ,\alpha )\right)\right\} f_{r_i}(r)dr},
\end{split}
\end{equation}
with $Z(\tau,\alpha)=\frac{{2\tau }}{{\alpha  - 2}}{}_2{F_1}[1,1 - \frac{2}{\alpha };2 - \frac{2}{\alpha }; - \tau ]$,
and $\alpha>2$ and  $_2F_1[\cdot]$ is the Gauss hypergeometric function.

Moreover,
$f_{r_i}(r)$ is given by
\begin{equation}\label{pdf11}
f_{r_i}(r)=\frac{{2\pi {\lambda _i}r}}{{{A_i}}}\exp \left( - \pi \sum\limits_{j = 1}^M {{\lambda _j}{C_j}^2{r^2}} \right).
\end{equation}
\end{theorem}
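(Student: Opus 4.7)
The approach is to condition on the distance $r$ from the typical UE to its serving BS in tier $i$, evaluate the conditional SINR coverage probability via the Laplace transform of the aggregate interference, and then integrate against the conditional distance density $f_{r_i}(r)$.

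First, I would derive $f_{r_i}(r)$. The UE is served by tier $i$ precisely when its nearest tier-$i$ BS, at distance $r$, yields a stronger average received power than the nearest BS of every other tier, which is equivalent to $R_j \ge (P_j/P_i)^{1/\alpha} r = C_j r$ for each $j \neq i$. Using the void probability of an HPPP, $\mathbb{P}[R_j \ge C_j r] = \exp(-\pi \lambda_j C_j^2 r^2)$, and multiplying by the contact distance density $2\pi\lambda_i r \exp(-\pi \lambda_i r^2)$ of tier $i$ (noting $C_i = 1$), then normalizing by the tier-association probability $A_i$ from Theorem~1, yields the expression in (\ref{pdf11}).

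Second, for the conditional coverage probability, the unit-mean exponential distribution of $h_{i0}$ converts $\mathbb{P}[P_i h_{i0} r^{-\alpha}/(I+\sigma^2) > \tau \mid r]$ into $\exp(-\tau r^\alpha\sigma^2/P_i)\,\mathcal{L}_I(s)$ with $s = \tau r^\alpha/P_i$, where $\mathcal{L}_I(s)$ is the Laplace transform of the aggregate interference. By the independence of the activated point processes across tiers, $\mathcal{L}_I(s)$ factorises as $\prod_{j=1}^M \mathcal{L}_{I_j}(s)$.

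Third, I would apply the HPPP Laplace functional to each tier $j$. Because the UE has selected tier $i$ at distance $r$, every interfering BS of tier $j$ must lie outside a disk of radius $C_j r$ about the origin. Averaging first over the Rayleigh fading $h_{jk}$ produces, for each $j$,
\[
\mathcal{L}_{I_j}(s) = \exp\!\left(-2\pi\widetilde\lambda_j \int_{C_j r}^{\infty} \frac{s P_j\, y^{1-\alpha}}{1 + s P_j y^{-\alpha}}\, dy\right).
\]
The substitution $u = y/(C_j r)$ together with the identity $s P_j (C_j r)^{-\alpha} = \tau$ reduces the integral to $C_j^2 r^2 \int_1^{\infty} \tau u/(u^\alpha + \tau)\, du$. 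A further change of variable $v = u^{-\alpha}$ turns this into a Beta-type integral that I would match with the Euler integral representation of the Gauss hypergeometric function, giving $\frac{\tau C_j^2 r^2}{\alpha-2}\, {}_2F_1[1,1-2/\alpha;2-2/\alpha;-\tau]$. Hence $\mathcal{L}_{I_j}(s) = \exp(-\pi \widetilde\lambda_j C_j^2 r^2 Z(\tau,\alpha))$.

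Finally, multiplying the noise factor with the per-tier Laplace factors and integrating over $r \in [0,\infty)$ against $f_{r_i}(r)$ yields (\ref{cov}). The main obstacle is the per-tier Laplace integral: correctly identifying the exclusion region of radius $C_j r$ (inherited from the cell-association rule, applied here to the thinned activated process as per the paper's HPPP assumption on $\widetilde\Phi$) and then recognising the resulting one-dimensional integral as a hypergeometric form. The remaining steps are routine bookkeeping of the tier-wise factorisation and the outer integration over the serving distance.
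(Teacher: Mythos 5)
Your proposal is correct and follows essentially the same route as the paper's proof: condition on the serving distance, use the Rayleigh fading to express the conditional coverage as a noise term times the interference Laplace transform, factorize over tiers, apply the PPP probability generating functional with exclusion radius $C_j r$, and reduce the resulting integral to the hypergeometric form $Z(\tau,\alpha)$ before integrating against $f_{r_i}(r)$. The only cosmetic difference is that you derive $f_{r_i}(r)$ explicitly from the void probabilities, whereas the paper imports it from the prior literature.
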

\begin{proof}
See Appendix~B.
\end{proof}
\vspace{0.2cm}

By substituting (\ref{eq:a1}) and (\ref{cov}) into (\ref{sumcov}),
we can obtain an analytical expression for the coverage probability.
It is important to note that:
\emph{1)} The impact of the BS tier association and the BS selection on the coverage probability is measured in (\ref{eq:a1}) and (\ref{pdf11}),
the expressions of which are based on $\lambda_i$ and $\lambda_j$.
This is because all the BSs can be chosen by the UEs.
\emph{2)} The impact of the interference on the coverage probability is measured in (\ref{cov}).
Note that instead of $\lambda_j$,
we plug ${\widetilde \lambda }_j$ into (\ref{cov}),
because only the activated BSs emit effective interference into the considered network.

\subsection{Area Spectral Efficiency}

In this subsection, we use the average ergodic rate of a typical UE randomly located in the considered multi-tier network to define the ASE.
Using the same approach as in (\ref{sumcov}),
the average ergodic rate can be expressed as
\begin{equation} \label{sumrate}
\mathbb{R}=\sum\limits_{i=1}^{M}A_i\mathbb{R}_i,
\end{equation}
where $\mathbb{R}_i$ is the average ergodic rate of a typical UE associated with the $i$-th tier,
and it is defined by
\begin{equation}\label{rate1}
\mathbb{R}_i\triangleq \mathbb{E}_r\left\{\mathbb{E}_{\textrm{SINR}_i}[\log_2(1+\textrm{SINR}_i(r))]\right\}.
\end{equation}
The unit of the average rate is bps/Hz/km$^2$.
It is important to note that the average is taken over both the channel fading distribution and the spatial PPP.
The ergodic rate is first averaged on the condition that the typical UE is at a distance $x$ from its serving BS in the $i$-th tier.
Then, the rate is averaged by calculating the expectation over the distance $r$.

We present our results on $\mathbb{R}_i$ in Theorem 3 shown on the top of next page.
By substituting (\ref{eq:a1}) and (\ref{rate}) into (\ref{sumrate}),
we then can obtain an analytical expression for the ASE.
\begin{figure*}[ht]
\begin{theorem}
The average ergodic rate of a typical UE associated with the $i$-th BS tier is
\begin{equation}\label{rate}
\normalsize
\begin{split}
\mathbb{R}_i=
\frac{{2\pi {\lambda _i}}}{{{A_i}}}\int\limits_0^\infty{\int\limits_0^\infty  {r\exp \left\{  - ({2^t} - 1){r^\alpha }{\sigma ^2}{P_i}^{ - 1} - \pi {r^2}\left\{ \sum\limits_{j = 1}^m {{C_j}^2\left[{{\widetilde \lambda }_j}Z({2^t} - 1,\alpha ) + {\lambda _j}\right]} \right\} \right\} dtdr}},
\end{split}
\end{equation}
where $Z({e^t} - 1,\alpha )=\frac{{2({2^t} - 1)}}{{\alpha  - 2}}{}_2{F_1}[1,1 - \frac{2}{\alpha };2 - \frac{2}{\alpha };1 - {2^t}]$, and $\alpha>2$.
\end{theorem}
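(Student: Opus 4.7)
The plan is to start from the definition (\ref{rate1}) and convert the expectation of the logarithm into a tail-probability integral, then invoke Theorem~2 to evaluate the conditional CCDF of $\textrm{SINR}_i(r)$, and finally combine everything with the PDF $f_{r_i}(r)$ from (\ref{pdf11}).

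First, I would use the standard identity that for any non-negative random variable $X$,
\begin{equation*}
\mathbb{E}[\ln(1+X)] = \int_0^\infty \mathbb{P}[X > e^t - 1]\,dt,
\end{equation*}
or equivalently, in base-2, $\mathbb{E}[\log_2(1+X)] = \int_0^\infty \mathbb{P}[X > 2^t - 1]\,dt$. Applied to $X = \textrm{SINR}_i(r)$ (which is non-negative), this converts the inner expectation in (\ref{rate1}) into an integral of the conditional coverage probability, so that
\begin{equation*}
\mathbb{R}_i = \mathbb{E}_r\!\left\{\int_0^\infty \mathbb{P}[\textrm{SINR}_i(r) > 2^t - 1\,|\,r]\,dt\right\}.
\end{equation*}

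Second, I would extract the conditional (on the link distance $r$) form of the coverage probability that is implicit in Theorem~2: inspecting the integrand of (\ref{cov}) shows that
\begin{equation*}
\mathbb{P}[\textrm{SINR}_i(r) > \tau\,|\,r] = \exp\!\left\{-\frac{\tau r^\alpha \sigma^2}{P_i} - \pi r^2 \sum_{j=1}^M \widetilde\lambda_j C_j^2 Z(\tau,\alpha)\right\}.
\end{equation*}
Substituting $\tau = 2^t - 1$ yields the conditional CCDF needed in the previous step, with $Z(\tau,\alpha)$ replaced by $Z(2^t-1,\alpha)$ as defined after (\ref{rate}).

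Third, I would write the outer expectation $\mathbb{E}_r\{\cdot\}$ as an integral against the PDF $f_{r_i}(r)$ from (\ref{pdf11}), swap the order of integration (justified by Tonelli, since the integrand is non-negative), and then collect the two Gaussian-in-$r$ exponential factors. One factor, $\exp(-\pi r^2 \sum_j C_j^2 \widetilde\lambda_j Z(2^t-1,\alpha))$, comes from the conditional coverage probability and represents interference from active BSs; the other, $\exp(-\pi r^2 \sum_j \lambda_j C_j^2)$, comes from $f_{r_i}(r)$ and represents the cell-association geometry across all BSs. Pulling the common prefactor $\tfrac{2\pi\lambda_i}{A_i}\,r$ out and merging the two $\pi r^2$ exponents into the single bracket $\sum_j C_j^2[\widetilde\lambda_j Z(2^t-1,\alpha) + \lambda_j]$ gives exactly (\ref{rate}).

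The main obstacle is conceptual rather than computational: one must keep track of which BS density ($\lambda_j$ vs.\ $\widetilde\lambda_j$) enters at each stage. Specifically, $\widetilde\lambda_j$ appears only through the interference term in the conditional coverage probability, because only active BSs contribute to the aggregate interference in (\ref{SINR}), whereas $\lambda_j$ appears in $f_{r_i}(r)$ because the nearest-BS selection for cell association is performed over all BSs in each tier (active or idle). Once this distinction is made explicit, applying Fubini to exchange the $t$- and $r$-integrals and grouping the $r^2$ terms in the exponent is a routine calculation that produces (\ref{rate}).
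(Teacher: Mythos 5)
Your proposal is correct and follows essentially the same route as the paper: the paper's Appendix~C likewise converts $\mathbb{E}[\log_2(1+\textrm{SINR}_i(r))]$ into $\int_0^\infty \mathbb{P}[\textrm{SINR}_i(r)>2^t-1]\,dt$ and then reuses the conditional coverage probability from the proof of Theorem~2 with $\tau=2^t-1$ before integrating against $f_{r_i}(r)$. In fact your write-up is more explicit than the paper's (which merely says ``the rest of the proof is similar''), and your emphasis on where $\widetilde\lambda_j$ versus $\lambda_j$ enters is exactly the right point to make precise.
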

\begin{proof}
See Appendix~C.
\end{proof}
\hrulefill
\vspace*{4pt}
\end{figure*}

\section{Simulation and Discussion}

In this section, we evaluate the network performance and provide numerical results to validate the accuracy of our analysis.

\subsection{Validation and Discussion on the BS Inactive Probability}

We consider a 3-tier HCN defined by the 3GPP \cite{3Gpp} to show the accuracy of our modeling.
In particular,
we use the following parameter values:
$P_1=46$ dBm, $P_2=30$ dBm, $P_3=24$ dBm, $\lambda_1=10$ BSs/km$^2$, $\lambda_2=100$ BSs/km$^2$ and $\lambda_3 \in[100,500]$ BSs/km$^2$.
Besides,
we adopt the following parameters for the network:
$\alpha=3.75$, $q=b=3.5$,
and the UE density is set to $\lambda_{u}=300$ UEs/km$^2$.
\begin{figure}
\centering
  \includegraphics[width=0.47\textwidth]{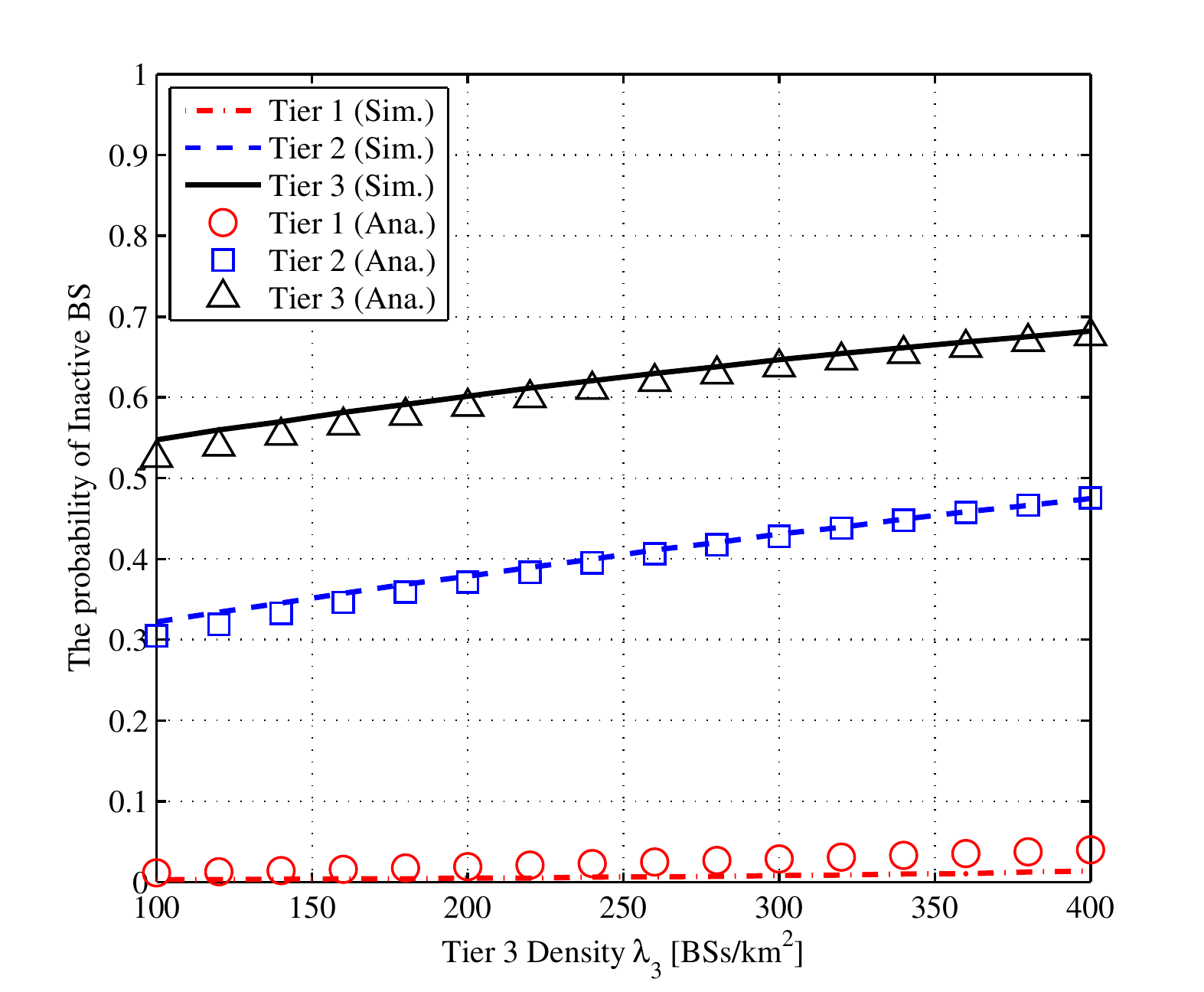}\\
  \caption{The inactive BS probability for each HCN tier}\label{fig:3tieroff}
\end{figure}
In Fig.~\ref{fig:3tieroff},
we draw curves of $\mathbb{P}_i^{\textrm{off}}$ versus $\lambda_3$.
As we can observe from this figure,
our analytical results match well with the simulation results.
Moreover, they also show that
\emph{i)} the probability of a BS being inactive in each tier increases with $\lambda_3$,
when $\lambda_u$ is a finite value,
and that
\emph{ii)} the BSs with a lower transmit power have a less activation probability.
For example, more than 40$\%$ and 60$\%$ of the BSs in tier 2 and tier 3 are idle when $\lambda_3>300~BSs/km^2$.
This means that a large number of UEs are associated with the BSs in tier 1,
as they can provide stronger signals to these UEs.

\subsection{Validation and Discussion on the Coverage Probability}

In this section, we first validate the accuracy of Theorem 2,
where the network consists of 2 BS tiers.
Specially, we use $P_1=30$ dBm, $P_2=24$ dBm, $\lambda_1=100$ BSs/km$^2$ and $\tau=0$ dB.
The rest of the parameters are the same as those in the previous subsection.
\begin{figure}
\centering
  \includegraphics[width=0.47\textwidth]{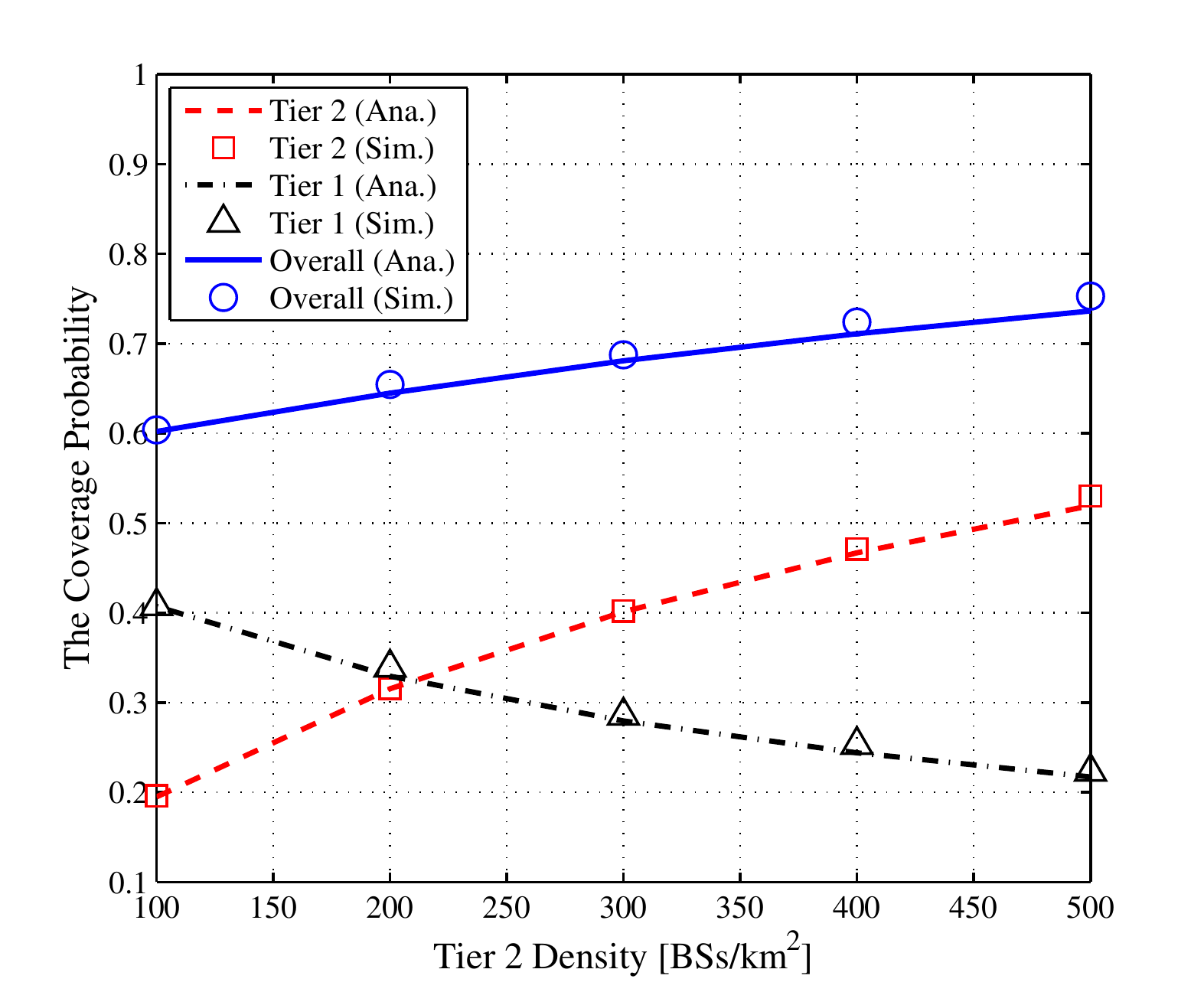}\\
  \caption{The coverage probability with respect to tier 2 density $\lambda_2$}\label{fig:2tiercov}
\end{figure}
In Fig.~\ref{fig:2tiercov},
we show the results of $\mathbb{P}^{\textrm{cov}}$ with respect to $\lambda_1$.
As we can see from the figure,
there are some small errors between  the simulation and  analytical results in each tier.
For example, there is about $1\%$ error when $\lambda_2$ is about 200 BSs/km$^2$.
With the increasing number of BSs,
the error becomes insignificant.
The reason of such error is that the spatial correlation in UE association process is not considered in our analysis.
Specially,
when performing simulations,
nearby UEs have a high probability of being covered and served by the same BS.
However, for tractability, in the analysis,
we consider the BS association of different UEs as independent process,
which underestimates the active BS density.
Due to the good accuracy of $\mathbb{P}^{\textrm{cov}}$,
we will only use analytical results of $\mathbb{P}^{\textrm{cov}}$ for the figures in the sequel.
\begin{figure}
\centering
  \includegraphics[width=0.47\textwidth]{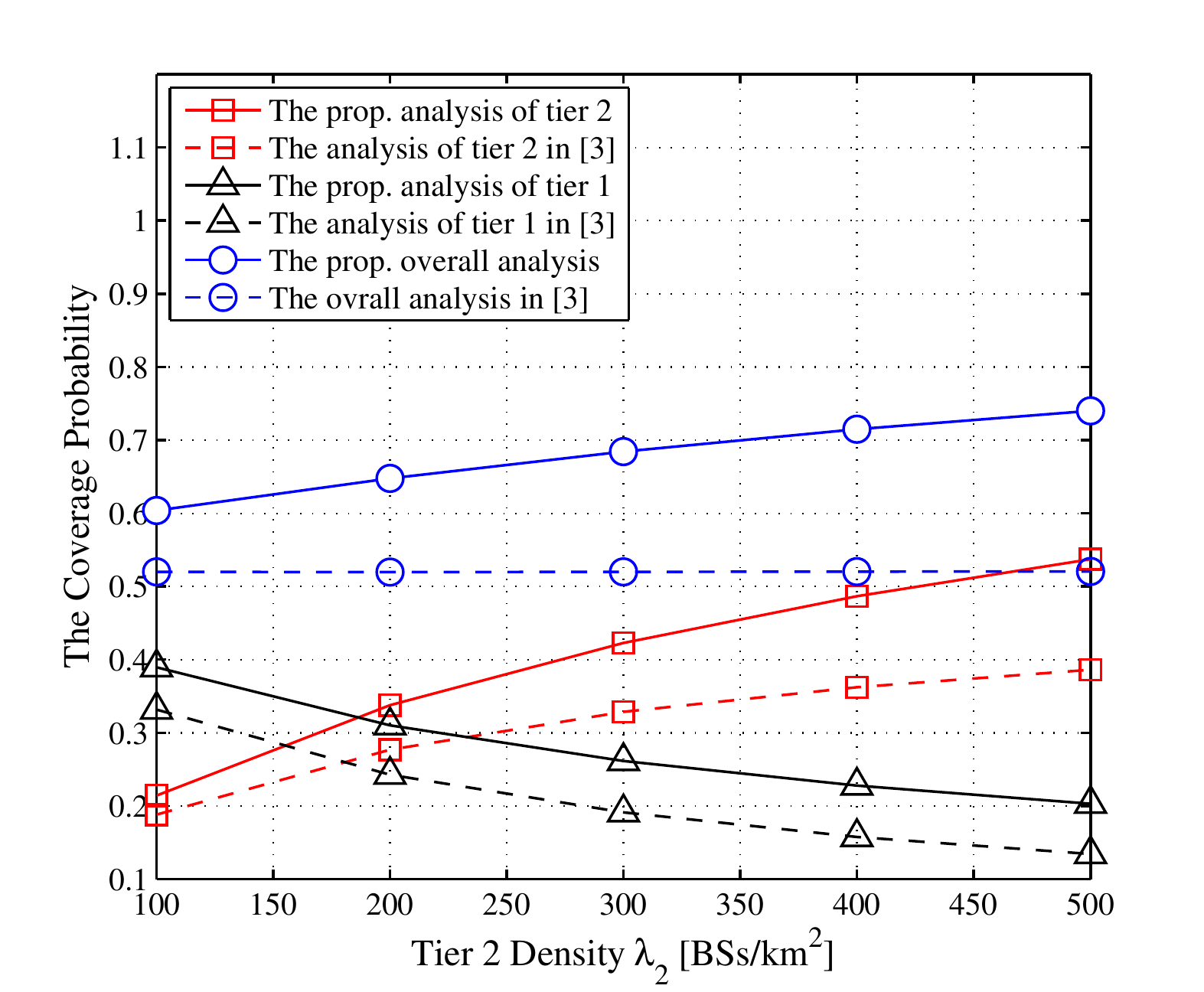}\\
  \caption{The coverage probability compared with the analytical results in \cite{jo2012heterogeneous}}\label{fig:covcompare}
\end{figure}

In Fig.~\ref{fig:covcompare},
we compare our analytical results of coverage probability with those in~\cite{jo2012heterogeneous}.
In~\cite{jo2012heterogeneous},
an infinite number of UEs are considered,
so all BSs are working in the fully-loaded mode.
As can be observed from  Fig.~\ref{fig:covcompare},
based on the results in~\cite{jo2012heterogeneous},
the coverage probabilities of tier~1 and tier~2  decreases and increases as the network densifies, respectively.
As a result, the overall coverage probability approaches a constant.
However, our analytical results show that
although the coverage probabilities of tier~1 and tier~2 show a similar trend as those in~\cite{jo2012heterogeneous}
(also decreases and increases as the network densifies),
the overall coverage probability of the HCN does not follow the same trend.
Due to the IMC considered in our analysis,
the overall coverage probability performance continuously increases as the BS density increases.
The intuition behind this phenomenon is that the interference power will remain constant with the network densification in each tier thanks to the IMC\footnote{The interference power will become constant eventually when there is an increasing number of BSs. Because of the IMC, the number of active BSs is at most equal to the number of UEs, and the distance between one UE and its serving BS keeps decreasing. Thus, from the point of the typical UE, the injecting interference from other active BSs can be regarded as the aggregate interference generated by BSs on a HPPP plane with the same intensity as the UE intensity. Such aggregate interference is bounded and statistically stable [6].},
while the signal power will continuously grow due to the closer proximity of the serving BS as well as the larger BS pool to select from.
This enables stronger serving BS links.

\subsection{Validation and Discussion on the ASE Performance}

\begin{figure}
\centering
  \includegraphics[width=0.47\textwidth]{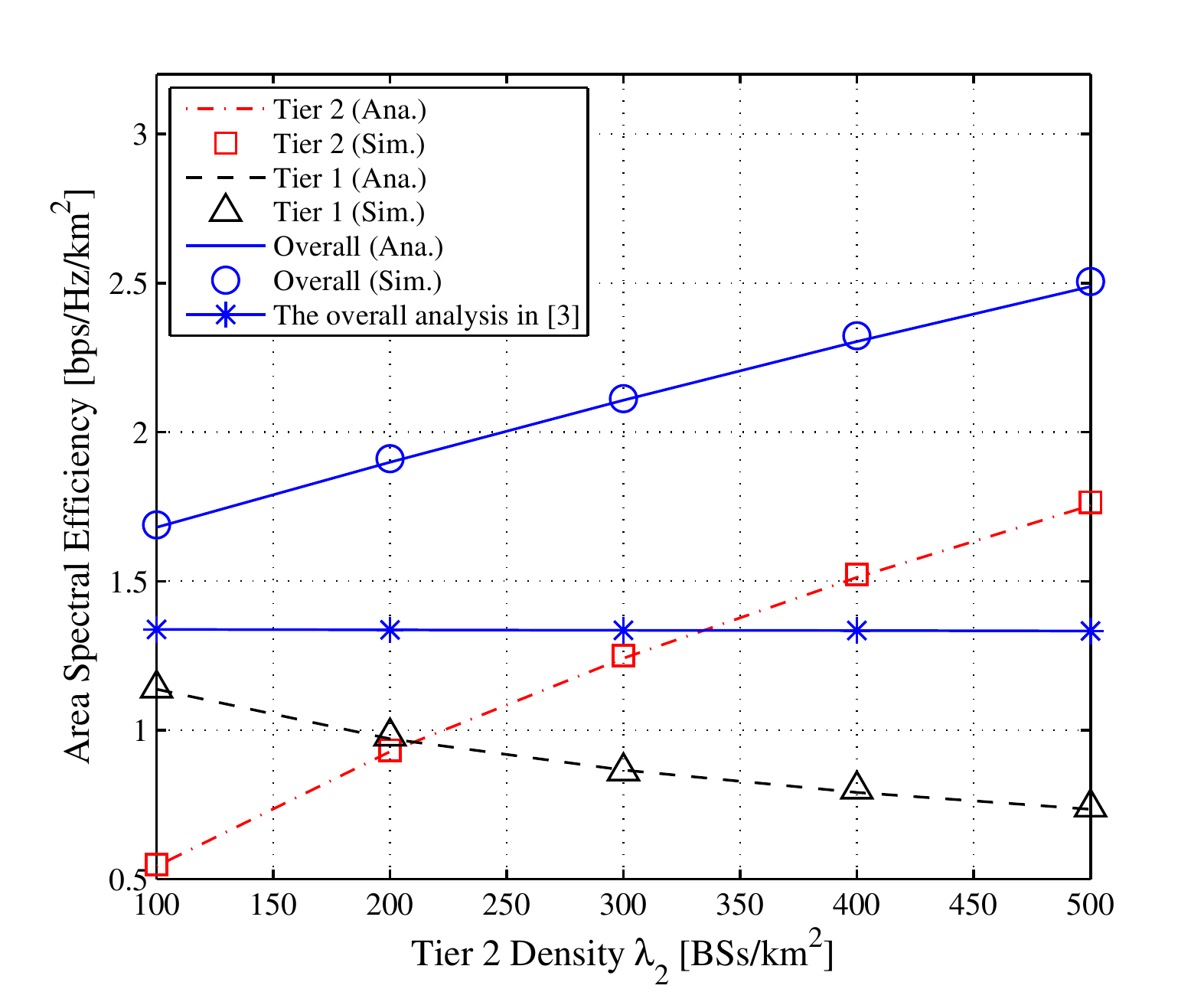}\\
  \caption{The ASE with respect to tier 2 density $\lambda_2$}\label{fig:ase}
\end{figure}

In this subsection, we validate the accuracy of Theorem 3,
As in the previous subsection,
the network consists of 2 BS tiers,
and other parameters are set to be the same with the previous subsections.

In Fig. \ref{fig:ase},
we can observe that the ASE analytical results match well with the simulation results.
Moreover, the results show that with the increasing number of tier 2 BSs,
the ASE of tier 2 increases,
while that of tier 1 reduces.
The performance improvement of tier 2 is because
the interference power from tier 2 BSs remains constant thank to the IMC,
while the UE is served by a stronger link using a BS in the tier 2,
as the network densifies.
However, for the UE associated with tier 1,
although the power of the serving link does not decline,
the cross-tier interference power from tier 2 BSs keeps increasing.
Thus, the performance of tier 1 shows a decreasing trend.
We also compare our proposed overall results with those in~\cite{jo2012heterogeneous}.
Different from~\cite{jo2012heterogeneous},
the impact of IMC is considered in our proposed model,
which results in a better system performance.
The performance of our proposed model grows with the BS density,
while that in \cite{jo2012heterogeneous} is kept constant as the number of BSs increases.
This new observation sheds new light on the design and deployment of HCNs in 5G.

\section{Conclusions}

In this paper, 
we have studied the impact of the idle mode capacity (IMC) on the network performance in multi-tier heterogeneous cellular networks (HCNs) with a limited number of user equipments (UEs). 
It is interesting to observe that the number of tiers and density of base stations (BSs) do affect the cell activation/inactivation probability, 
the coverage probability and the area spectral efficiency (ASE). 
Different from the existing works, 
our results imply that densifying the BSs in each tier will increase the network capacity as well as the quality of service for UEs.

In the considered network model, 
UEs tend to connect to BSs with a larger transmit power, 
e.g., the macrocell BSs, 
and this effect will be aggravated as the power difference among different tiers of BSs becomes larger. 
Thus, 
this leads to a high small cell inactivation probability and an over-utilization of macrocells. 
To avoid this phenomenon, 
the current 4G LTE networks have adopted the technologies of cell range expansion (CRE) and enhanced inter-cell interference coordination (eICIC) combined with the almost blank subframe (ABS) mechanism~\cite{lopez2015towards}. 
As our future work, 
we will consider such mechanisms in our theoretical analyses.

\begin{appendices}

\section{Proof of Theorem 1}

When the typical UE is associated with a BS in the $i$-th tier, the received signal from the serving BS is the largest one, which can be interpreted as:
\begin{equation}\label{eq:event}
\begin{split}
S_{i,0}>\max\limits_{j,j\neq i}S_{j,0}
\Longrightarrow\begin{cases}
S_{i,0}>S_{1,0}\\
S_{i,0}>S_{2,0}\\
...\\
S_{i,0}>S_{j,0}\\
\end{cases}
\Longrightarrow\begin{cases}
r_{1,0}>C_2r_{i,0}\\
r_{2,0}>C_3r_{i,0}\\
...\\
r_{j,0}>C_jr_{i,0},\\
\end{cases}
\end{split}
\end{equation}
where $C_j=\left({\frac{P_j}{P_i}} \right)^\frac{1}{\alpha},~i=2,3,...,m$.

We can calculate the probability of the event in (\ref{eq:event}) as
\begin{equation}\label{eq:multi_condition}
\begin{split}
&\mathbb{P}\left(S_{i,0}>\max\limits_{j,j\neq i}S_{j,0}\right)
=\prod\limits_{j = 1,j\neq i}^M\mathbb{P} \left(r_{j,0} > {C_j}{r_{i,0}}\right)\\
&=\prod\limits_{j = 1,j\neq i}^M \mathbb{P}[\text{No BS closer than $C_j$$r_{i,0}$ in the $j$-th tier}]\\
&=\prod\limits_{j = 1,j\neq i}^M \exp ( - {\pi\lambda_j}{C_j}^2{r_{i,0}^2}).\\
\end{split}
\end{equation}

According to the null probability of a 2-D Poisson process with density $\lambda_i$,
the PDF of $r_{i,0}$ is given by
\begin{equation}\label{ppdf}
	{\mathbb{f}_{{r_{i,0}}}}(r) = \exp ( - \pi {\lambda _i}{r^2})2\pi {\lambda _i}r.
\end{equation}
By combining (\ref{eq:multi_condition}) and (\ref{ppdf}) together, we then have
\begin{equation}
	\begin{split}
		A_i&=\mathbb{E}_{{r_{i,0}}}\left\{\mathbb{P}\left(S_{i,0}>\max\limits_{j,j\neq i}S_{j,0}\right)\right\}\\
		&=\int_0^{\infty}\prod\limits_{j = 1,j\neq i}^M \exp ( - {\pi\lambda_j}{C_j}^2{r^2}){\mathbb{f}_{{r_{i,0}}}}(r)dr\\
		&=\frac{{{\lambda _i}}}{{\sum\limits_{j = 1}^M {{\lambda _j}{C_j}^2} }}.
	\end{split}
\end{equation}

\section{Proof of Theorem 2}
From (\ref{sumcov}), the coverage probability of the $i$-th tier is given by
\begin{equation}
\begin{split}
\mathbb{E}_r\left\{\mathbb{P}[{\textrm{SINR}}_i(r) > \tau ]\right\}=\int_0^\infty\mathbb{P}[{\textrm{SINR}}_i(r) > \tau ]f{_{r_i}}(r)dr,
\end{split}
\end{equation}
where $f{_{r_i}}(x)$ is the PDF of the distance $r_i$ between a typical UE and its serving BS in the $i$-th tier.

Based on the proof in \cite{jo2012heterogeneous}, we can obtain the PDF of $r_i$ as follows,
\begin{equation}\label{pdfff}
f_{r_i}(r)=\frac{{2\pi {\lambda _i}r}}{{{A_i}}}\exp \left( - \pi \sum\limits_{j = 1}^M {{\lambda _j}{C_j}^2{r^2}} \right),
\end{equation}
where $A_i$ is given in Theorem 1.

The SINR of UE in (\ref{SINR}) can be rewritten as $\gamma(r)=\frac{P_ih_{i,0}}{r^\alpha(I_r+\sigma^2)}$, where $I_r=\sum\nolimits_{j = 1}^M\sum\nolimits_{k \in \widetilde \Phi \backslash {b_0}} {{P_j}{h_{jk}}{{\left| {{Y_{jk}}} \right|}^{ - \alpha }}}$. So the CCDF of the typical UE SINR at distance $r$ from its associated BS in the $i$-th tier can be expressed as
\begin{equation}\label{ssss}
\begin{split}
\mathbb{P}[\gamma(r)>\tau]&=\mathbb{P}\left\{h_{i0}>r^\alpha P_i^{-1}(I_r+\sigma^2)\tau\right\}\\
&=\exp\left(\frac{-\sigma^2r^\alpha\tau}{P_i}\right)\prod\limits_{j=1}^M L_{I_r}(r^\alpha P_i^{-1} \tau),
\end{split}
\end{equation}
and the Laplace transform of $I_r$ is
\begin{equation}\label{lap}
\begin{split}
&L_{I_r}(r^\alpha P_i^{-1} \tau)\\
&=\mathbb{E}_{I_r}\left\{\exp\left(-r^\alpha P_i^{-1} \tau I_r\right)\right\}\\
&={E_{{\Phi _j}}}\left\{ \exp \left( - {r^\alpha }{P_i}^{ - 1}\tau {\sum\limits_{k \in {
\widetilde{\Phi} _i}\backslash {B_{i,0}}} {{P_j}{h_{jk}}\left| {{Y_{jk}}} \right|} ^{ - \alpha }}\right)\right\}\\
&\overset{(a)}=\exp \left\{  - 2\pi {\widetilde \lambda _j}\int_{{C_j}r}^\infty  {\left[1 - {L_{{h_{jk}}}}({r^\alpha }{C_j}{y^{ - \alpha }})\right]} ydy\right\}\\
&=\exp \left\{  - 2\pi {\widetilde \lambda _j}\int_{{C_j}r}^\infty  {\frac{y}{{1 + {{({r^\alpha }{C_j}\tau )}^{ - 1}}{y^\alpha }}}dy} \right\}\\
&\overset{(b)}=\exp \left\{  - \pi {\widetilde \lambda _j}{C_j}^2{r^2}{\tau ^{\frac{2}{\alpha }}}\int_{{\tau ^{ - \frac{2}{\alpha }}}}^\infty  {{{(1 + {u^{\frac{\alpha }{2}}})}^{ - 1}}du} \right\} \\
&\overset{(c)}=\exp \left\{  - \pi {\widetilde \lambda _j}C_j^2{r^2}Z(\tau ,\alpha )\right\}
\end{split}
\end{equation}
where step (a) states that the closest interferer in the $j$-th tier is at least at a distance $C_jr$, step (b) is obtained from $u=(x^\alpha C_j \tau)^{-\frac{2}{\alpha}}y^2$, and $Z(\tau,\alpha)=\frac{{2\tau }}{{\alpha  - 2}}{}_2{F_1}[1,1 - \frac{2}{\alpha };2 - \frac{2}{\alpha }; - \tau ]$, and $\alpha>2$ and  $_2F_1[\cdot]$ denotes the Gauss hypergeometric function in step (c).
Combining (\ref{pdfff}), (\ref{ssss}) and (\ref{lap}), we obtain the coverage probability of a typical UE associated with the $i$-th tier in (\ref{cov}).

\section{Proof of Theorem 3}
From (\ref{rate1}), the average ergodic rage of the $i$-th tier is
\begin{equation}
\mathbb{R}_i= \int_0^{\infty}\left\{\mathbb{E}_{\textrm{SINR}_i}[\log_2(1+\textrm{SINR}_i(r))]\right\}f_{r_i}(r)dr,
\end{equation}
where $f_{r_i}(x)$ is given by (\ref{pdfff}). Since $\mathbb{E}[R]=\int_0^{\infty}\mathbb{P}[R>r]dr$ for $R>0$, we can obtain
\begin{equation}
\begin{split}
&\mathbb{E}_{\textrm{SINR}_i}[\log_2(1+\textrm{SINR}_i(r))]\\
&=\int_0^{\infty}\mathbb{P}\left\{\log_2[1+\textrm{SINR}_i(r)]>t\right\}dt\\
&=\int_0^{\infty}\mathbb{P}\left(\textrm{SINR}_i(r)>2^t-1\right)dt
\end{split}
\end{equation}
The rest proof is similar with Appendix A, and the result is obtained by plugging $\tau = 2^t -1$.
\end{appendices}
\bibliographystyle{IEEEtran}
\bibliography{Cellonoff}
\end{document}